\newcommand{\Li}{\textnormal{Li}}
\newcommand{\var}{\textnormal{var}}
\newtheorem{theorem}{Theorem}[section]
\newtheorem{corollary}[theorem]{Corollary}
\newtheorem{lemma}[theorem]{Lemma}
\newcommand{\doublewidetilde}[1]{{%
  \mathpalette\double@widetilde{#1}%
}}
\newcommand{\double@widetilde}[2]{%
  \sbox\z@{$\m@th#1\widetilde{#2}$}%
  \ht\z@=.85\ht\z@
  \widetilde{\box\z@}%
}
\begin{document}

\title[Consensus time in a voter model with concealed and publicly
expressed opinions]{Consensus time in a voter model with concealed and publicly expressed opinions}

\author{Michael T. Gastner$^1$, Be\'ata Oborny$^2$ and M\'at\'e Guly\'as$^3$}

\address{$^1$ Yale-NUS College, Division of Science, 16 College Avenue
  West, \#01-220 Singapore 138527\\
$^2$ Department of Plant Taxonomy, Ecology and Theoretical Biology,
Biological Institute, Lor\'and E\"otv\"os University (ELTE), P\'azm\'any
P. stny. 1/c, Budapest, H-1117, Hungary\\
$^3$ ``Lend\"ulet'' Research Center for Educational and Network Studies, Centre for Social Sciences, Hungarian Academy of Sciences, T\'oth K\'alm\'an street 4, Budapest, H-1097, Hungary}
\ead{\mailto{michael.gastner@yale-nus.edu.sg},
  \mailto{beata.oborny@ttk.elte.hu}, \mailto{mategulyas@gmail.com}}
\begin{abstract}
The voter model is a simple agent-based model to mimic opinion dynamics in social
networks: a randomly chosen agent adopts the opinion of a randomly chosen neighbour. 
This process is repeated until a consensus emerges. 
Although the basic voter model is theoretically intriguing, it misses an important feature of real opinion dynamics: it does not distinguish between an agent's publicly expressed opinion and her inner conviction. 
A person may not feel comfortable declaring her conviction if her social circle appears to hold an opposing view. 
Here we introduce the Concealed Voter Model where we add a second, concealed layer of opinions to the public layer.
If an agent's public and concealed opinions disagree, she can reconcile them by either publicly disclosing her previously secret point of view or by accepting her public opinion as inner conviction. 
We study a complete graph of agents who can choose from two opinions. We define a martingale $M$ that determines the probability of all agents eventually agreeing on a particular opinion. 
By analyzing the evolution of $M$ in the limit of a large number of agents, we derive the leading-order terms for the mean and standard deviation of the consensus time (i.e.\ the time needed until all opinions are identical). We thereby give a precise prediction by how much concealed opinions slow down a consensus.

\end{abstract}

\maketitle

\section{Introduction}
\label{sec:intro}

The voter model, introduced in the 1970s~\cite{CliffordSudbury73,HolleyLiggett75}, has became a paradigmatic model for the theoretical study of opinion dynamics~\cite{Liggett99,CastellanoEtAl09}. 
It describes the emergence of an ordered state (e.g.\ a consensus) through the local interactions of individual agents (e.g.\ voters).  
This phenomenon occurs in various contexts. Therefore, the original voter model and its extended versions have been broadly applied not only in social and political sciences~\cite{CastellanoEtAl09,FernandezGracia_etal14}, but also in chemistry (e.g.\ in the study of catalytic  reactions~\cite{FrachebourgKrapivsky96}) and biology (e.g.\ for modelling ecological competition~\cite{DurrettLevin96,ChaveLeigh02,BorileEtAl14} and prey-predator interaction~\cite{RavaszEtAl04}). 

Most voter models share the features that
\begin{enumerate}
\item
each agent is in one state (e.g.\ has one particular opinion) out of two alternatives and
\item
the state can only be changed through pairwise interactions between agents.
\end{enumerate}
(For exceptions from rules (i) and (ii), see~\cite{VazquezConstrained04} and~\cite{LambiotteVacillating07}, respectively.)
In each interaction, a focal agent is selected at random together with a randomly chosen adjacent agent. 
The graph that describes the adjacency can be complete, regular (e.g.\ a square lattice) or more complex (see examples in~\cite{Sood_etal08} and~\cite{MasudaOpControl15}). 
The focal agent can adopt or reject the opinion of the adjacent agent. 
In the original version of the voter model, which we call the {\it Basic Voter Model} (BVM), the rule of interaction is simple: the focal agent must adopt the neighbour's opinion.

The BVM on a finite, strongly connected graph (i.e.\ a graph in which there is a directed path between every pair of agents) inevitably reaches a global consensus~\cite{Serrano_etal09} (i.e.\ an absorbing state in which every agent holds the same opinion).  
Various versions of the model have applied more complex rules~\cite{CastellanoEtAl09}, some of which may {\it not} guarantee that all agents ultimately share the same opinion. 
However, in this paper we only consider models that {\it must} end in a consensus. 
In this context, an important question is how long it takes until all agents agree. The consensus time (also called exit time, hitting time or first-passage time) depends on the rules of interaction and graph structure~\cite{SoodRedner05}.

\begin{figure}
\begin{center}
\includegraphics[width=0.36\textwidth]{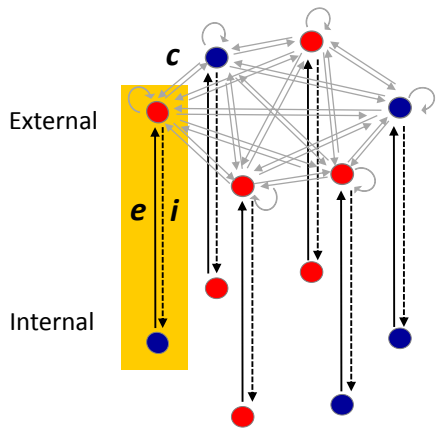}
\caption{\label{fig:modeldescr}Example for the CVM in a small, complete network, with $N=6$.  Each agent is represented by two nodes, one in the external and one in the internal layer. The state of the agent marked by a yellow background is $Rb$; note that we use upper-case letters for the external and lower-case letters for the internal opinions. The following elementary events can happen to this agent: copying the external opinion of a randomly selected agent (solid gray arrow); externalization (solid black arrow); or internalization (dotted black arrow). The corresponding rates are $c$, $e$ and $i$. The loops in the external layer indicate that copying the agent's own opinion is also among the options; in that case, the agent keeps the original opinion.}
\end{center}
\end{figure}

In this article, we introduce a new kind of model, the {\it Concealed Voter Model} (CVM), which differs from the BVM by distinguishing between an agent's publicly expressed opinion and her inner conviction about the particular subject.
For this reason, we add a second, concealed layer of opinions to the public layer (Figure~\ref{fig:modeldescr}).
The duality between inner conviction and publicly expressed opinion is an important phenomenon in every society (see examples in~\cite{Kuran09}). 
To our knowledge, the first theoretical model that explicitly considered this duality was the ``partisan'' voter model~\cite{MasudaHeteroVot10,MasudaRednerTruth10}.  
Every agent in that model is a ``partisan'' with a fixed and innate preference for one of the two opinions, which influences the agent's publicly expressed opinion. 
Thus, the partisan model distinguishes between two layers, but the interaction between the layers is only unidirectional: the internal opinion influences the external one, but the partisan voter model does not contain any feedback mechanism from the external to the internal opinion.
The main novelty in the CVM is that it permits a bidirectional interaction between both layers.

We compare the mean and the standard deviation of the consensus time in the BVM and CVM. 
Thus, we investigate whether the existence of concealed opinions increases the coexistence time of alternative opinions in a group of people and, if yes, to what extent.
In real societies, concealed opinions are ubiquitous on many kinds of issues~\cite{Kuran09}, for example on political votes, debated social norms or consumption habits. 
Understanding the composition of opinions in the hidden layer, and how it influences the public, is important for making reliable predictions about collective opinion formation.

\section{The Basic and the Concealed Voter Model}
Let $N$ denote the number of agents in the system. 
In the BVM, the state of each agent $\alpha$ at time $t$ is defined as her
publicly expressed opinion $\omega_{\rm{ext}}(\alpha, t)$.
The subscript ``ext'' emphasizes that this is an {\it external} opinion in
the sense that any adjacent agent can learn that $\alpha$'s opinion is
$\omega_{\rm{ext}}(\alpha, t)$.
In our model, $t$ is continuous and opinions are updated asynchronously (see~\cite{Gastner15} for a discussion of differences between voter models with synchronous and asynchronous updates).
The opinion is binary.
We denote the options as red ($R$) or blue ($B$). 
The agent updates her opinion by selecting one of the adjacent agents,
say $\beta$, randomly and copying $\beta$'s opinion $\omega_{\rm{ext}}(\beta, t)$. 
The time intervals between two consecutive copying events by agent
$\alpha$ are exponentially distributed with a rate $c$. 
All time intervals are independent
of each other so that the times between two successive copying events in a system composed of $N$ agents is exponentially distributed with a rate $c{\cdot}N$.

In the CVM, the state of each agent is given by a pair of states $(
\omega_{\rm{ext}}(\alpha, t),\omega_{\rm{int}}(\alpha, t) )$, where $\omega_{\rm{ext}}$ is the {\it external} and $\omega_{\rm{int}}$ the {\it internal} opinion (Figure~\ref{fig:modeldescr}). 
We denote the two possible external opinions with a capital letter $R$ or $B$, whereas the lower-case letter $r$ or $b$ stands for her internal opinion. The agents' four possible states are denoted as $Rr$, $Rb$, $Br$ and $Bb$.
The rules of updating in the external layer are the same as in the BVM. 
An additional process, that is specific to the CVM, is introspection. When $\omega_{\rm{ext}} (\alpha, t)\neq \omega_{\rm{int}}(\alpha, t) $ (i.e.\ there is a discordance between the external and the internal opinion), the agent can relax it by either of two processes: {\it externalization}, whereby the formerly concealed internal state becomes public; or {\it internalization}, when the publicly voiced opinion becomes an internal conviction. Altogether, there are three competing processes in the system: copying, externalization, and internalization. The time intervals between two consecutive events of each kind are independent and exponentially distributed with rates $c{\cdot}N$, $e{\cdot}N$ and $i{\cdot}N$, respectively, in the group of $N$ agents.
Copying other agents' opinions promotes consensus within the external layer whereas externalization and internalization advance consensus between the two layers.

In this article, we study the global outcome of these processes for large $N$ in complete graphs. 
Because complete graphs are strongly connected, the agents must reach a consensus if $N$ is finite.
We include the reflexive relation (i.e.\ there is a loop from an agent to herself) for mathematical convenience.
For large $N$, the difference between the dynamics with and without the self-loop is negligible. In the BVM, we describe the state of the system at any given time $t$ by the fraction $\rho_R(t)$ of agents whose opinion is red in the only (i.e.\ external) layer. In the CVM, we describe the state by (i)  $\rho_R$ in the {\it external} layer, (ii) the fraction $\rho_r$ of red opinions in the {\it internal} layer, and (iii) the fraction $\rho_{Rr}$ of agents whose opinion is red in {\it both} layers. 
We study the time evolution of these observables and examine the consensus time (i.e.\ the first time when only a single opinion is present in {\it both} layers).

In section~\ref{sec:bvm}, we review some published results about the BVM concerning the main statistical properties of the consensus time.  In order to investigate the same characteristics in the CVM, we first introduce an intermediate model in section~\ref{sec:tlvm} (the Two-Layered Voter Model), in which we  also assume the existence of two connected layers. However,  unlike in the CVM, the event $\omega_{\rm{ext}}(\alpha, t)=R$ is independent of $\omega_{\rm{int}}(\alpha, t) = r$ so that the equations are easier to solve.
Finally, we show in section~\ref{sec:cvm} that the CVM equations can be solved similarly.

\section{The Basic Voter Model (BVM) on the complete graph}
\label{sec:bvm}
The mean consensus time of the BVM on the complete graph has been the subject of earlier publications (e.g.\ \cite{Sood_etal08,Redner01}).
Here we briefly review these results and also state the equation for higher moments of the consensus time distribution.
We include this review because the BVM acts as a base case for comparison with the CVM and motivates the analytic techniques we apply below.

We describe the state $S$ of the system by the fraction
$\rho_R\in\{0, \frac1N, \ldots, \frac{N-1}N, 1\}$ of agents whose
opinion is red.
The probability that the next copy increases $\rho_R$ by $\frac1N$
equals the product of the probability $(1-\rho_R)$ that the copying
agent's opinion is blue and the probability $\rho_R$ that the copied
agent holds the red opinion.
Because copying events happen with rate $cN$, the transition rate from
$\rho_R$ to $\rho_R + \frac1N$ is
\begin{equation}
  Q_{\rm{BVM}}\left(\rho_R, \rho_R + \frac1N\right) =
  cN\rho_R(1-\rho_R).
  \label{eq:QincBVM}
\end{equation}
Similarly, the rate with which $\rho_R$ decreases to $\rho_R-\frac1N$
is the product of the copy rate $cN$, the probability $\rho_R$ that
the copying agent is red and the probability $(1-\rho_R)$ that the
copied opinion is blue,
\begin{equation}
  Q_{\rm{BVM}}\left(\rho_R, \rho_R - \frac1N\right) = cN\rho_R(1-\rho_R).
  \label{eq:QdecBVM}
\end{equation}
A comparison between \eref{eq:QincBVM} and \eref{eq:QdecBVM} reveals that
both transition rates are in fact equal.

The probability that in one copy we increase or decrease $\rho_R$ by
more than $\frac1N$ is zero.
The diagonal element $Q(\rho_R, \rho_R)$ of a transition rate matrix
$\mathbf{Q}$ is conventionally defined as $-\sum_{x\neq\rho_R}
Q(\rho_R, x)$.
Therefore, the only nonzero matrix elements in the row corresponding to
state $\rho_R$ are $Q_{\rm{BVM}}\left(\rho_R,\rho_R \pm
  \frac1N\right)$ and
\begin{equation}
  Q_{\rm{BVM}}(\rho_R,\rho_R) = -2cN\rho_R(1-\rho_R).
  \label{eq:Q0BVM}
\end{equation}

The relatively simple structure of $\mathbf{Q}_{\rm{BVM}}$ allows us
to verify that the state $S(t)$ at time $t$ is a martingale,
\begin{equation}
  E[S(t+u) \mid S(t)] = S(t),
  \label{eq:BVMmart}
\end{equation}
where $E[\ldots | \ldots]$ denotes the conditional expectation value
and $u$ is an arbitrary nonnegative time.
Equation~\eref{eq:BVMmart} follows directly from the fact that
$\sum_x x Q_{\rm{BVM}}(\rho_R,x) = 0$ for all $\rho_R$.
If we start a series of Monte Carlo simulations always from a fixed
initial fraction $S(0)=s_0$ of red agents, the martingale property
\eref{eq:BVMmart} manifests itself as follows.
As we increase the number of simulations, the sample mean of $S(t)$ --
averaged over different simulations, but at a fixed $t$ -- converges
to $s_0$ although each individual run is likely to differ from $s_0$~\cite{Suchecki_etal05}.

We denote the consensus time by $T_{\rm{cons}}$.
For every realization, we can tell whether $T_{\rm{cons}}=t$ is true without having to know any of the states at times $>t$.
Therefore, in the parlance of stochastic processes, $T_{\rm{cons}}$ is a ``stopping time''.
Because we also know that $S$ is a martingale with
time-independent lower and upper bounds $0$ and $1$, we must have
$E[S(T_{\rm{cons}})] = E[S(0)]$ (see~\cite{Durrett16}).
Moreover, the only two possible states at $T_{\rm{cons}}$ are either
$S(T_{\rm{cons}}) = 1$ (i.e.\ a red consensus) or $S(T_{\rm{cons}})=0$
(i.e.\ a blue consensus) so that the probability of a
red consensus is given by the expected initial fraction $E[S(0)]$ of red
agents.
In the special case where all simulations start with $S(0)=s_0$, we
reach a red consensus with probability $s_0$.

How long does it take on average to reach a consensus from $s_0$?
If we denote the expected consensus time by
\[
\mu^{(1)}_{\rm{BVM}}(s_0) = E[T_{\rm{cons}} \mid S(0)=s_0],
\]
where the superscript ``(1)'' indicates that the left-hand side is the {\it first} moment of the conditional consensus time distribution,
then $\mu_{\rm{BVM}}^{(1)}$ must satisfy~\cite{Durrett16}
\begin{eqnarray}
  \sum_{s_0} Q_{\rm{BVM}}(\rho_R,s_0) \mu_{\rm{BVM}}^{(1)}(s_0) = -1 \qquad \mbox{for all
  $\rho_R=\frac1N,\ldots,\frac{N-1}N$},
  \label{eq:Qg1}\\
  \mu_{\rm{BVM}}^{(1)}(0) = \mu_{\rm{BVM}}^{(1)}(1) = 0.\nonumber
\end{eqnarray}
Inserting~\eref{eq:QincBVM}, \eref{eq:QdecBVM} and \eref{eq:Q0BVM}
into~\eref{eq:Qg1}, we obtain
\begin{equation}
\fl cN\rho_R (1-\rho_R)
\left[\mu_{\rm{BVM}}^{(1)}\left(\rho_R-\frac1N\right) -
  2\mu_{\rm{BVM}}^{(1)}(\rho_R) +
  \mu_{\rm{BVM}}^{(1)}\left(\rho_R+\frac1N\right)\right] = -1.
  \label{eq:g1-1}
\end{equation}
We approximate $\mu_{\rm{BVM}}^{(1)}$ by a smooth function and substitute its Taylor expansion into \eref{eq:g1-1}.
After dropping terms $O(N^{-3})$, which are negligible if $N\gg 1$, we obtain the differential equation
\begin{equation}
\frac {\rmd^2} {\rmd\rho_R^2}
  \mu_{\rm{BVM}}^{(1)}(\rho_R) = -\frac N {c\rho_R(1-\rho_R)}\ .
  \label{eq:g1''}
\end{equation}
The solution is~\cite{Sood_etal08,Redner01}
\begin{equation}
\mu_{\rm{BVM}}^{(1)}(\rho_R) = -\frac Nc\cdot [\rho_R\ln\rho_R +
(1-\rho_R)\ln(1-\rho_R)].
\label{eq:g1BVM}
\end{equation}

We can generalize this approach to higher moments of the consensus
time.
Let $\mu^{(n)}_{\rm{BVM}}(s_0)$ denote the $n$-th moment of the consensus
time conditioned on the initial state $s_0$,
\[
\mu^{(n)}_{\rm{BVM}}(s_0) = E\left[T_{\rm{cons}}^n \mid S(0)=s_0\right].
\]
With the method described in section 1.6.2.2 of Ref.~\cite{Redner01}, one can 
%
%
obtain the following generalization
of~\eref{eq:g1''},
\begin{equation}
\frac {\rmd^2} {\rmd\rho_R^2}
\mu^{(n)}_{\rm{BVM}}(\rho_R) = -\frac{nN} c \frac{\mu^{(n-1)}_{\rm{BVM}}(\rho_R)} {\rho_R(1-\rho_R)}\ .
\label{eq:ode}
\end{equation}
For the second moment, in particular, we find the following result that we have not seen explicitly stated in the previous literature,
\begin{eqnarray}
  \fl \mu^{(2)}_{\rm{BVM}}&(\rho_R) =\label{eq:g2BVM}\\
  \fl &\frac{2N^2}{c^2} \left[ \rho_R
  \ln\rho_R + (1-\rho_R) \ln(1-\rho_R) - \rho_R\Li_2(\rho_R) -
  (1-\rho_R)\Li_2(1-\rho_R) + \frac{\pi^2}6\right],\nonumber
\end{eqnarray}
where $\Li_2$ is the dilogarithm.
%
%
%
%
%

Our objective is to compare the mean $\mu_{\rm{BVM}}^{(1)}(s_0)$ and the standard
deviation 
\begin{equation}
\sigma_{\rm{BVM}}(s_0) = \sqrt{\mu^{(2)}_{\rm{BVM}}(s_0) - \left[{\mu_{\rm{BVM}}^{(1)}}(s_0)\right]^2}
\label{eq:sigmaBVM}
\end{equation}
of the BVM consensus time with those of the CVM.
We will derive that, for $N\gg 1$, the CVM's mean and standard deviation differ by a factor that depends on the rates of copying, externalization and internalization.
We will give the explicit equation~\eref{eq:tauCVM} for this factor below.

\section{An intermediate, two-layered voter model (TLVM)}
\label{sec:tlvm}

\subsection{Motivating and defining the TLVM}
\label{subsec:motivateTLVM}

The CVM is more complicated than the BVM because we need more than one
variable to describe its state.
The agents can have four possible combinations of
external and internal opinions ($Rr$, $Rb$, $Br$ and $Bb$).
Because the sum of agents in these four states is constrained by the
total number $N$ of all agents, the state space in the CVM is three-dimensional.
We can, for example, choose the following set of variables to uniquely
characterize the current state,
\begin{itemize}
  \item $\rho_R$: fraction of agents whose {\it external} opinion is
    red (i.e.\ $Rr$ and $Rb$ agents),
  \item $\rho_r$: fraction of agents whose {\it internal} opinion is
    red (i.e.\ $Rr$ and $Br$ agents),
  \item $\rho_{Rr}$: fraction of $Rr$ agents.
\end{itemize}
In the CVM, the events $\omega_{\rm{ext}}(\alpha, t) = R$ (i.e.\ agent $\alpha$'s {\it external} opinion at time $t$ is red) and $\omega_{\rm{int}}(\alpha, t)=r$ (i.e.\ her {\it internal} opinion is red) generally depend on each other. That is, $P[\omega_{\rm{ext}}(\alpha,t) = R, \omega_{\rm{int}}(\alpha,t) = r] = \rho_{Rr}(t) \neq \rho_R(t) \rho_r(t)$.  
As we will show below, we can calculate the leading-order term of the CVM consensus time for $N\gg1$, but the inequality $\rho_{Rr}(t) \neq \rho_R(t)\rho_r(t)$ complicates the solution. 
For this reason, we will first solve a simplified model where we ignore the dependence between the external and internal layer, i.e.\ we set $\rho_{Rr}(t) = \rho_R(t)\rho_r(t)$. We include the simpler model in this article because it demonstrates, with fewer intermediate steps, the analytic techniques that we will apply later to the CVM.  Although the model does not yet fully capture the CVM dynamics, we will still be able to transfer some of the results directly to the CVM.

As in the CVM, we still distinguish between an external and an internal
layer of opinions and, therefore, call this model the {\it Two-Layered
Voter Model} (TLVM).
We keep two of the CVM's features.
\begin{itemize}
\item Each agent copies the external opinion of a random agent with rate $c$.
\item There is no direct opinion exchange in the internal layer.
\end{itemize}
We can impose the condition $\rho_{Rr}(t) = \rho_R(t)\rho_r(t)$ by introducing links between all $N^2$ pairs formed by one
external and one internal opinion.
\begin{itemize}
\item With rate $\frac eN$, every pair externalizes (i.e.\ the internal
opinion becomes the external opinion).
\item With rate $\frac iN$, each pair internalizes.
\end{itemize}
We divide $e$ and $i$ by $N$ so that the mean time between
externalization and internalization events is equal in the TLVM and CVM.
We list all transitions in the TLVM together with their rates in Table~\ref{tab:TLVMtrans}.

\begin{table}
\caption{\label{tab:TLVMtrans}Transitions from the state $(\rho_R, \rho_r)$ in the TLVM and their rates.}
\begin{tabular*}{\textwidth}{@{}lll}
\br
& How is the new & Transition rate matrix element\\
New state $(x,y)$ & state reached? & $Q_{\rm{TLVM}}[(\rho_R,\rho_r), (x,y)]$\\
\mr
$\left(\rho_R+\frac1N, \rho_r\right)$ & A $B$ agent copies a neighbour & $cN\rho_R(1 - \rho_R) + eN(1 - \rho_R)\rho_r$\\
& with external opinion $R$ or\\
& a $Br$ pair externalizes.\\[3pt]
$\left(\rho_R-\frac1N, \rho_r\right)$ & An $R$ agent copies a neighbour & $cN\rho_R(1 - \rho_R) + eN\rho_R(1-\rho_r)$\\
& with external opinion $B$ or\\
& an $Rb$ pair externalizes.\\[3pt]
$\left(\rho_R, \rho_r+\frac1N\right)$ & An $Rb$ pair internalizes. & $iN\rho_R(1 - \rho_r)$\\[3pt]
$\left(\rho_R, \rho_r-\frac1N\right)$ & A $Br$ pair internalizes. & $iN(1-\rho_R)\rho_r$\\[3pt]
$(\rho_R, \rho_r)$ & Negative sum of all rates & $-2cN\rho_R(1-\rho_R) +$\\
& above. & $\hspace{1.0cm}(e+i)N(2\rho_R\rho_r-\rho_R-\rho_r)$\\
\br
\end{tabular*}
\end{table}

\begin{figure}
\begin{center}
\includegraphics[width=0.5\textwidth]{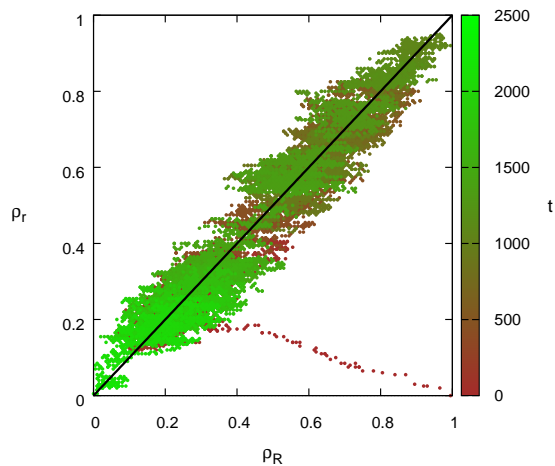}
\caption{\label{fig:TLVMtraj} An example trajectory for the TLVM with $N=200$, $c=1$, $e=0.25$ and $i=0.0625$. We plot the state of the system after every 10th update. In the initial phase (brown dots), the state moves rapidly towards the diagonal line $\rho_R = \rho_r$. Afterwards the system stays near the diagonal until it reaches one of the two consensus states in the lower left and upper right corners of the square. As $N$ gets larger, the state stays closer to the diagonal.
}
\end{center}
\end{figure}

Similar to the BVM, the TLVM possesses a martingale, albeit a
slightly more complex one.
We define the function
\begin{equation}
  m(\rho_R, \rho_r) = \frac{i\rho_R + e\rho_r}{e+i}\ ,
  \label{eq:m}
\end{equation}
which maps the two-dimensional input $(\rho_R, \rho_r)$ onto a real
number in $[0,1]$.
We can view $m$ as the relative proportion of the red opinion that is
present in the combination of the external and internal layer,
prorated by the weights $\frac i {e+i}$ and $\frac e {e+i}$,
respectively.
Let $S_R(t)$ be the random variable that equals the fraction of $R$
agents.\footnote{$S_R$ should not be confused with $\rho_R$. $S_R$ is the function that maps a stochastic configuration of opinions to the fraction $\rho_R$ of red external opinions. Thus, $S_R$ is a random variable whereas $\rho_R$ is a number between $0$ and $1$.}
Similarly, $S_r(t)$ equals the fraction of $r$ agents.
We show in Lemma~\ref{lem:martingale} in the appendix that $m[S_R(t), S_r(t)]$ is a martingale.
In the appendix, we also prove a corollary of this lemma (Corollary~\ref{cor:exit_prob}): if the initial state is $S_R(0)=\rho_R$ and $S_r(0)=\rho_r$, then the probability of reaching a red consensus is $m(\rho_R, \rho_r)$.
As we will see shortly, $m[S_R(t), S_r(t)]$ is a convenient summary statistic to derive the time needed to reach a consensus.

\subsection{The consensus time in the TLVM}
\label{sec:constimeTLVM}

Suppose we are in the state $S_R(t) = \rho_R$, $S_r(t) = \rho_r$ at time $t$.
What can we infer about the state at a later time $t+u$?
In~\ref{app:shortTermEvol}, we prove the following property (Theorem~\ref{thm:shortTermEvol}): after a transient of duration $O[(e+i)^{-1}]$, $S_R(t+u)$ approximately equals $S_r(t+u)$.
If $e$ and $i$ are independent of $N$, the relative error we commit by the approximation $S_R\approx S_r$ is negligible if $N\gg 1$ because the consensus time turns out to be $O(N)$ (see equation~\ref{eq:gTLVM1}).
In other words, we can separate two time scales. On a fast time scale, the expected value of $S_R-S_r$ quickly decays to zero.
On a slow time scale, the system diffuses along the line $\rho_R=\rho_r$ towards one of the absorbing states.
In Figure~\ref{fig:TLVMtraj}, we show a typical run of the TLVM that confirms this conjecture.
We calculated the data shown in Figure~\ref{fig:TLVMtraj} and all other numerical results in this article with the Gillespie algorithm, an exact implementation of the stochastic dynamics~\cite{Gillespie76}.

Because the TLVM spends most of the time in states with $S_R\approx S_r$, we can approximately describe the dynamics in terms of the one-dimensional random variable $M=m[S_R(t), S_r(t)]$, where $m$ is the function defined in~\eref{eq:m}.
In this simplified picture, we use the symbol $\widetilde{\mathbf{Q}}_{\rm{TLVM}}$ for the transition rate matrix.
The nonzero elements of $\widetilde{\mathbf{Q}}_{\rm{TLVM}}$ are
\begin{eqnarray*}
 \fl \widetilde{Q}_{\rm{TLVM}}\left(m, m+\frac i{(e+i)N}\right) = \widetilde{Q}_{\rm{TLVM}}\left(m, m-\frac i{(e+i)N}\right) = (c+e)Nm(1-m),\\
 \fl \widetilde{Q}_{\rm{TLVM}}\left(m, m+\frac e{(e+i)N}\right) = \widetilde{Q}_{\rm{TLVM}}\left(m, m-\frac e{(e+i)N}\right) = iNm(1-m),\\
 \fl \widetilde{Q}_{\rm{TLVM}}(m, m) = -2(c+e+i)Nm(1-m),
\end{eqnarray*}
which we can derive by substituting~\eref{eq:m}
into the transition rates in Table~\ref{tab:TLVMtrans}.

In analogy to the BVM, we define the mean consensus time in the TLVM as
\[
\mu_{\rm{TLVM}}^{(1)}(m_0) = E[T_{\rm{cons}} \mid M(0) = m_0].
\]
It must satisfy the equivalent of~\eref{eq:Qg1},
\[
\sum_{m_0}\widetilde{Q}_{\rm{TLVM}}(m, m_0)\mu_{\rm{TLVM}}^{(1)}(m_0) = -1 
\]
for all $m = \frac{i N_R + e N_r}{N(e+i)}$ with $N_R, N_r\in\{0, 1, \ldots, N\}$ and $m\notin\{0,1\}$.
As in the BVM, we Taylor expand $\mu_{\rm{TLVM}}^{(1)}$ to second order and obtain the differential equation
\[
\frac{\rmd^2}{\rmd m^2} \mu_{\rm{TLVM}}^{(1)}(m) = -\frac{(e+i)^2 N}{i[(c+e)i + e^2]} \cdot \frac 1 {m(1-m)}\ .
\]
The solution, subject to the boundary conditions $\mu_{\rm{TLVM}}^{(1)}(0) = \mu_{\rm{TLVM}}^{(1)}(1) = 0$, is
\begin{equation}
\mu_{\rm{TLVM}}^{(1)}(m) = - \frac{(e+i)^2N}{i[(c+e)i+e^2]} \cdot [m\ln m + (1-m)\ln(1-m)]\ .
\label{eq:gTLVM1}
\end{equation}
The calculation of the higher moments of the consensus time
\[
\mu_{\rm{TLVM}}^{(n)} = E\left[T_{\rm{cons}}^n \mid M(0) = m_0\right]
\]
is similar to~\eref{eq:g2BVM} in the BVM.
%
%
Induction on $n$ shows that
\[
\mu_{\rm{TLVM}}^{(n)}(m) = \left(\frac{c(e+i)^2} {i[(c+e)i + e^2]}\right)^n \mu_{\rm{BVM}}^{(n)}(m).
\]
Because $m \approx \rho_R \approx \rho_r$, the dynamics of the external and internal opinions both behave like BVMs, but with a time that is rescaled by a factor 
\begin{equation}
\tau_{\rm{TLVM}}(c,e,i) = \frac{c(e+i)^2} {i[(c+e)i + e^2]}\ .
\label{eq:tauTLVM}
\end{equation}
In particular, the standard deviations of TLVM and BVM are related by
\begin{equation}
\sigma_{\rm{TLVM}}(m) = \tau_{\rm{TLVM}}(c,e,i)\sigma_{\rm{BVM}}(m).
\label{eq:sigmaTLVM}
\end{equation}
In Figure~\ref{fig:meansdTLVM}, we show, as a numerical confirmation of~\eref{eq:gTLVM1} and~\eref{eq:sigmaTLVM}, the mean and standard deviation of the TLVM consensus time obtained from Monte Carlo simulations.
The numerical results are in excellent agreement with the theoretical predictions (i.e.\ the finite size effect at $N=1000$ does not modify the results substantially).

There are four main conclusions from the TLVM that we will be able to transfer to the CVM. 
(1) Although the state space is two-dimensional, the system spends most of the time near the one-dimensional manifold $\rho_R = \rho_r$. 
(2) We can express the position along this manifold in terms of a function $m$ that has the property that $m(S_R, S_r)$ is a martingale.
(3) After expressing the transition rate matrix in terms of $m$, we can derive the moments of the TLVM consensus time distribution. 
(4) The TLVM moments are related to those of the BVM by $\mu_{\rm{TLVM}}^{(n)} = \tau_{\rm{TLVM}}(c,e,i)^n \mu_{\rm{BVM}}^{(n)}$ for a scale factor $\tau_{\rm{TLVM}}$ that is independent of $n$. We
will encounter similar rules in the CVM.

\begin{figure}
\begin{center}
\includegraphics[width=\textwidth]{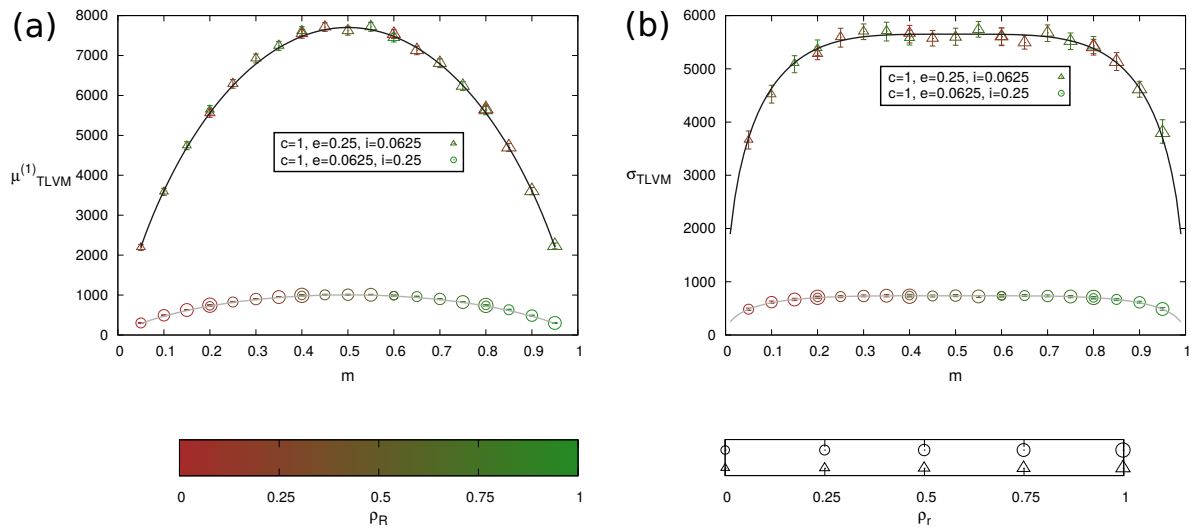}
\caption{The consensus time in the TLVM as a function of the initial value of $m$ (defined in equation~\ref{eq:m}): (a) the mean $\mu_{\rm{TLVM}}^{(1)}$, and (b) the standard deviation $\sigma_{\rm{TLVM}}$. Triangles and circles indicate numerical results from Monte Carlo simulations for two different combinations of $c$, $e$ and $i$. The symbol colour represents the value of  $\rho_R$, the symbol size the value of $\rho_r$. For some values of $m$, we include results from two combinations of $\rho_R$ and $\rho_r$. The symbols are larger than the error bars in several cases. The system size for all simulations is $N=1000$. The curves show the theoretical predictions from equation~\eref{eq:gTLVM1} in panel (a) and from equation~\eref{eq:sigmaTLVM} in panel (b).}
\label{fig:meansdTLVM}
\end{center}
\end{figure}

\section{Solving the Concealed Voter Model (CVM)}
\label{sec:cvm}

\subsection{Short-term evolution of the states in the CVM}
\label{sec:shorttermCVM}

\begin{table}
\caption{\label{tab:CVMtrans}Transitions from the state $(\rho_R, \rho_r, \rho_{Rr})$ in the CVM and their rates.}
\begin{tabular*}{\textwidth}{@{}lll}
\br
& How is the new & Transition rate matrix element\\
New state $(x,y,z)$ & state reached? & $Q_{\rm{CVM}}[(\rho_R,\rho_r,\rho_{Rr}), (x,y,z)]$\\
\mr
$\left(\rho_R+\frac1N, \rho_r, \rho_{Rr} + \frac1N\right)$ & A $Br$ agent externalizes& $cN\rho_R(\rho_r - \rho_{Rr}) + eN(\rho_r - \rho_{Rr})$\\
& or copies a neighbour\\
& with external opinion $R$.\\[3pt]
$\left(\rho_R, \rho_r+\frac1N, \rho_{Rr}+\frac1N\right)$ & An $Rb$ agent internalizes. & $iN(\rho_R - \rho_{Rr})$\\[3pt]
$\left(\rho_R-\frac1N, \rho_r, \rho_{Rr}-\frac1N\right)$ & An $Rr$ agent copies a & $cN\rho_{Rr}(1-\rho_R)$\\
& neighbour with external\\
& opinion $B$.\\[3pt]
$\left(\rho_R+\frac1N, \rho_r, \rho_{Rr}\right)$ & A $Bb$ agent copies a & $cN(1-\rho_R-\rho_r+\rho_{Rr})\rho_R$\\
& neighbour with external\\
& opinion $R$.\\[3pt]
$\left(\rho_R-\frac1N, \rho_r, \rho_{Rr}\right)$ & An $Rb$ agent externalizes & $cN(\rho_R-\rho_{Rr})(1-\rho_R) + $\\
& or copies a neighbour & $\hspace{1cm}eN(\rho_R-\rho_{Rr})$\\
& with external opinion $B$.\\[3pt]
$\left(\rho_R, \rho_r-\frac1N, \rho_{Rr}\right)$ & A $Br$ agent internalizes. & $iN (\rho_r-\rho_{Rr})$\\[3pt]
$(\rho_R, \rho_r, \rho_{Rr})$ & Negative sum of all rates & $-2cN\rho_R(1-\rho_R) +$\\
& above. & $\hspace{1.0cm}(e+i)N(2\rho_{Rr}-\rho_R-\rho_r)$\\
\br
\end{tabular*}
\end{table}

In the CVM, we denote the random variables that map a stochastic configuration at time $t$ to the fractions $\rho_R$, $\rho_r$ and $\rho_{Rr}$ (defined in section~\ref{subsec:motivateTLVM}) by $S_R(t)$, $S_r(t)$ and $S_{Rr}(t)$ respectively.  
Compared to the TLVM, the CVM not only has a three- instead of a two-dimensional state space, but
the transition rate matrix $\mathbf{Q}_{\rm{CVM}}$ also has more
nonzero elements (listed in Table~\ref{tab:CVMtrans}).
Despite the added complexity, some of the TLVM results remain unchanged for the CVM.
In particular, the TLVM martingale $M(t) = m(S_R(t), S_r(t))$, where $m$ is defined in equation~\eref{eq:m}, is also a martingale of the CVM (see Lemma~\ref{eq:TLVMmartApp} in the appendix).
Moreover, as in the TLVM, $m(\rho_R, \rho_r)$ equals the probability of reaching a red consensus if the initial condition satisfies $S_R(0) = \rho_R$ and $S_r(0) = \rho_r$ (see Corollary~\ref{cor:exit_prob}).
Curiously, $m$ does not depend on $\rho_{Rr}$.
If we arrange the four possible combinations of external and internal opinions $Rr$, $Rb$, $Br$ and $Bb$ in a $2\times 2$ contingency table, the marginal frequencies $\rho_R$ and $\rho_r$ fully determine the probability that all agents ultimately agree with the red opinion. The joint distribution of internal and external opinions, which can be derived with the help of $\rho_{Rr}$, does not add more information about the probable consensus opinion.

Besides identical martingales, the TLVM and the CVM also have
equations~\eref{eq:TLVMcondExpectS_R}--\eref{eq:varSr}
in common (see Theorem~\ref{thm:shortTermEvol} in the appendix).
We can deduce from~\eref{eq:TLVMcondExpectS_R}--\eref{eq:varSr} that, after a transient of duration
$O[(e+i)^{-1}]$, the state $(S_R,S_r,S_{Rr})$ of the CVM satisfies
$S_R\approx S_r \approx m(S_R, S_r)$.
If $e$ and $i$ are independent of  $N$ and $N\gg 1$, the transient is negligible compared to the consensus
time so that we can approximate the dynamics of the CVM with a
simplified two-dimensional state space.
Upon setting $\rho_R = \rho_r = m$, the matrix elements in
Table~\ref{tab:CVMtrans} become
\begin{eqnarray}
  \fl &\widetilde{Q}_{\rm{CVM}}\left[(m, \rho_{Rr}), \left(m+\frac i{N(e+i)}, \rho_{Rr} + \frac
    1 N\right)\right] = cNm(m-\rho_{Rr}) + eN(m-\rho_{Rr}),
  \label{eq:QCVM2+i+}\\
  \fl &\widetilde{Q}_{\rm{CVM}}\left[(m, \rho_{Rr}), \left(m + \frac e{N(e+i)}, \rho_{Rr} + \frac
    1 N\right)\right] = iN(m-\rho_{Rr}),
  \label{eq:QCVM2+e+}\\
  \fl &\widetilde{Q}_{\rm{CVM}}\left[(m, \rho_{Rr}), \left(m - \frac i{N(e+i)}, \rho_{Rr} -
    \frac 1 N\right)\right] = cN\rho_{Rr}(1-m),
  \label{eq:QCVM2-i-}\\
  \fl &\widetilde{Q}_{\rm{CVM}}\left[(m, \rho_{Rr}), \left(m + \frac i{N(e+i)},
    \rho_{Rr}\right)\right] = cN(1-2m+\rho_{Rr})m,
  \label{eq:QCVM2+i0}\\
  \fl &\widetilde{Q}_{\rm{CVM}}\left[(m, \rho_{Rr}), \left(m - \frac i{N(e+i)},
    \rho_{Rr}\right)\right] = cN(m-\rho_{Rr})(1-m) + eN(m-\rho_{Rr}),
  \label{eq:QCVM2-i0}\\
  \fl &\widetilde{Q}_{\rm{CVM}}\left[(m, \rho_{Rr}), \left(m - \frac e{N(e+i)}, \rho_{Rr}\right)
    \right] = iN(m - \rho_{Rr}),
  \label{eq:QCVM2-e0}\\
  \fl &\widetilde{Q}_{\rm{CVM}}[(m, \rho_{Rr}), (m, \rho_{Rr})] = -2cNm(1-m) +
    2(e+i)N(\rho_{Rr}-m).
    \label{eq:QCVM200}
\end{eqnarray}
With this two-dimensional approximation of $\mathbf{Q}_{\rm{CVM}}$, we can deduce the short-term evolution of $S_{Rr}$ (see Theorem~\ref{thm:tildeQCVM} in the appendix):
after a transient that lasts no longer than $O[(c+e+i)^{-1}]$, the states in the CVM satisfy $\rho_R\approx \rho_r\approx m$ and $\rho_{Rr}\approx \frac{cm^2+(e+i)m}{c+e+i}$. We show a typical trajectory of a CVM simulation in Figure~\ref{fig:CVMtraj} that confirms this approximation.

\begin{figure}
\begin{center}
\includegraphics[width=0.5\textwidth]{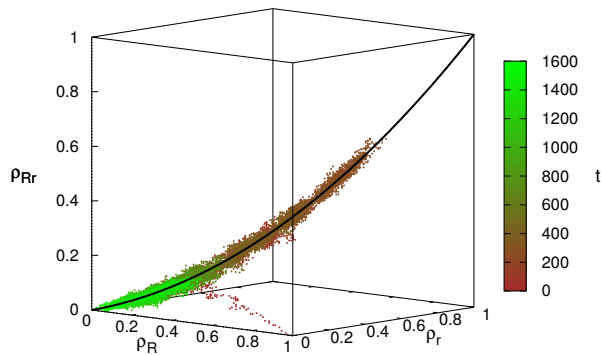}
\caption{\label{fig:CVMtraj} An example trajectory for the CVM with $N=200$, $c=1$, $e=0.25$ and $i=0.0625$. We plot the state of the system after every 10th update.
Similar to the TLVM dynamics in Figure~\ref{fig:TLVMtraj}, the system  moves in the initial phase (brown dots) rapidly towards a one-dimensional curve (black) that connects the two consensus states, which in the CVM are $\rho_R = \rho_r = \rho_{Rr} = 0$ or $\rho_R = \rho_r = \rho_{Rr} = 1$ at opposite ends of the cube. The parametric equations for the curve are $\rho_R = \rho_r = m$ and $\rho_{Rr} = \frac{cm^2 + (e+i)m}{c+e+i}$ for $m\in[0,1]$.
After the initial drift, the system stays in the vicinity of this curve until it reaches one of the consensus states.}
\end{center}
\end{figure}

\subsection{The consensus time in the CVM}

Figure~\ref{fig:CVMtraj} demonstrates that we can neglect the transient (i.e.\ the initial approach towards the black curve) compared to the consensus time.
We can thus approximate the CVM dynamics by a one-dimensional model whose transition rate matrix follows from setting $\rho_{Rr} = \frac{cm^2+(e+i)m}{c+e+i}$ in~\eref{eq:QCVM2+i+}--\eref{eq:QCVM200},
\begin{eqnarray}
\fl \stackrel{\approx}{Q}_{\rm{CVM}}\left(m, m+\frac i{(e+i)N}\right) = \;\stackrel{\approx}{Q}_{\rm{CVM}}\left(m, m-\frac i{(e+i)N}\right)\label{eq:QCVM+i}\\
= c\left(1+\frac e{c+e+i}\right)Nm(1-m),\nonumber\\
\fl \stackrel{\approx}{Q}_{\rm{CVM}}\left(m, m+\frac e{(e+i)N}\right) = \;\stackrel{\approx}{Q}_{\rm{CVM}}\left(m, m-\frac e{(e+i)N}\right) = \frac{ciNm(1-m)}{c+e+i}\ ,\\
\fl \stackrel{\approx}{Q}_{\rm{CVM}}(m,m) = -2c\left(1+\frac{e+i}{c+e+i}\right)Nm(1-m).\label{eq:QCVM0}
\end{eqnarray}
We use a double tilde in the symbol $\doublewidetilde{\mathbf{Q}}_{\rm{CVM}}$ for the transition rate matrix to express that~\eref{eq:QCVM+i}--\eref{eq:QCVM0} are approximations of $\widetilde{\mathbf{Q}}_{\rm{CVM}}$, which in turn is an approximation of the exact CVM transition rate matrix $\mathbf{Q}_{\rm{CVM}}$ given by Table~\ref{tab:CVMtrans}. The relative error that we introduce with these approximations is negligible if $N\gg 1$.

With the one-dimensional approximation $\doublewidetilde{\mathbf{Q}}_{\rm{CVM}}$, the derivation of the mean consensus time is now analogous to that of the TLVM in section~\ref{sec:constimeTLVM}.
We call the mean CVM consensus time 
\[
\mu_{\rm{CVM}}^{(1)}(m_0) = E[T_{\rm{cons}} \mid M(0) = m_0]
\]
so that
\begin{equation}
\sum_{m_0} \stackrel{\approx}{Q}_{\rm{CVM}}(m, m_0) \mu_{\rm{CVM}}^{(1)}(m_0) = -1
\label{eq:QCVMg1}
\end{equation}
if $m\notin\{0,1\}$ and
\begin{equation}
\mu_{\rm{CVM}}^{(1)}(0) = \mu_{\rm{CVM}}^{(1)}(1) = 0.
\label{eq:gCVM1BC}
\end{equation}
Upon inserting~\eref{eq:QCVM+i}--\eref{eq:QCVM0} into~\eref{eq:QCVMg1} and taking the continuum limit, we obtain the differential equation
\[
\frac{\rmd^2}{\rmd m^2}\mu_{\rm{CVM}}^{(1)}(m) = -\frac{(c+e+i)(e+i)^2  N} {ci[ci+(e+i)^2]} \cdot \frac 1{m(1-m)}\ .
\]
The solution, subject to the boundary condition~\eref{eq:gCVM1BC}, is
\begin{equation}
\mu_{\rm{CVM}}^{(1)}(m) = -\frac{(c+e+i)(e+i)^2 N} {ci[(e+i)^2 + ci]}\cdot [m\ln m + (1-m)\ln(1-m)].
\label{eq:gCVM1}
\end{equation}
Comparing~\eref{eq:gCVM1} with the BVM mean consensus time in~\eref{eq:g1BVM}, we find that
\[
\mu_{\rm{CVM}}^{(1)}(m) = \frac{(c+e+i)(e+i)^2}{i[(e+i)^2+ci]}\cdot \mu_{\rm{BVM}}^{(1)}(m),
\]
which generalizes to the higher moments 
as
\begin{equation}
\mu_{\rm{CVM}}^{(n)}(m) = \left(\frac{(c+e+i)(e+i)^2}{i[(e+i)^2+ci]}\right)^n \mu_{\rm{BVM}}^{(n)}(m).
\label{eq:gCVMn}
\end{equation}
It follows that the standard deviation of the consensus time obeys
\begin{equation}
\sigma_{\rm{CVM}}(m) = \frac{(c+e+i)(e+i)^2}{i[(e+i)^2+ci]}\cdot \sigma_{\rm{BVM}}(m),
\label{eq:sigmaCVM}
\end{equation}
where $\sigma_{\rm{BVM}}$ can be calculated from~\eref{eq:g2BVM} and~\eref{eq:sigmaBVM}.
Monte Carlo simulations of the CVM are in excellent agreement with equations~\eref{eq:gCVM1} and~\eref{eq:sigmaCVM}, see Figure~\ref{fig:meansdCVM}.

\begin{figure}
\begin{center}
\includegraphics[width=\textwidth]{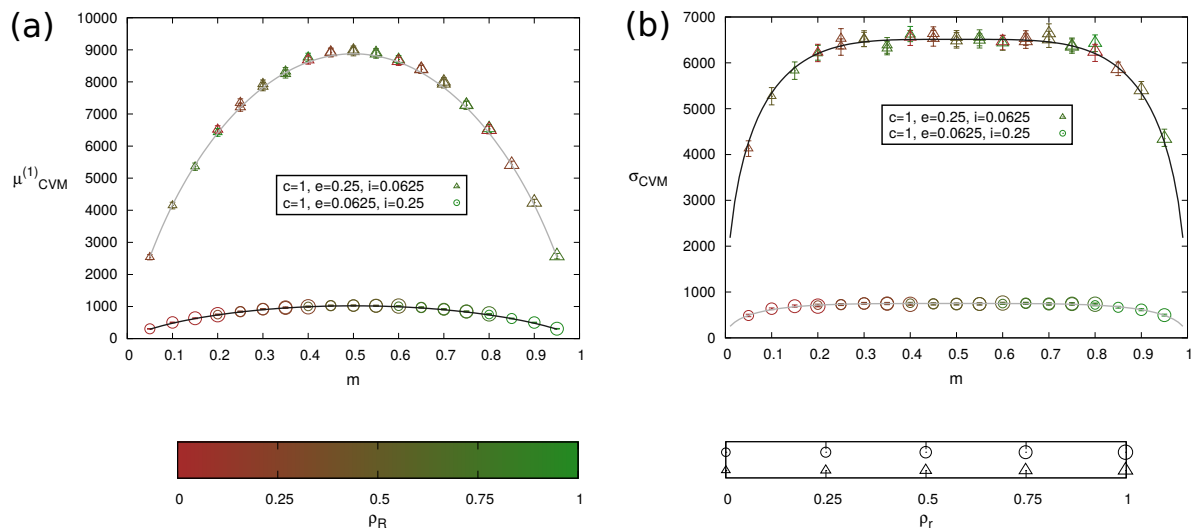}
\caption{(a) The mean consensus time $\mu_{\rm{CVM}}^{(1)}$ of the CVM and (b) the standard deviation $\sigma_{\rm{CVM}}$ as a function of $m$. 
As in Figure~\ref{fig:meansdTLVM}, triangles and circles are Monte Carlo results for two different combinations of $c$, $e$ and $i$.
The symbol colour specifies $\rho_R$, the symbol size $\rho_{r}$.
We also include results for different $\rho_{Rr}$ so that there are multiple measurements for a given value of $m$.
The symbols are larger than the error bars in several cases.
The system size for all simulations is $N=1000$.
The theoretical predictions from equations~\eref{eq:gCVM1} and~\eref{eq:sigmaCVM} are shown as black curves.
}
\label{fig:meansdCVM}
\end{center}
\end{figure}
Equation~\eref{eq:gCVMn} has the remarkable consequence that the CVM consensus time distribution differs from that of the BVM only by the prefactor 
\begin{equation}
\tau_{\rm{CVM}}(c, e, i) = \frac{(c+e+i)(e+i)^2}{i[(e+i)^2+ci]}\ .
\label{eq:tauCVM}
\end{equation}We plot the function $\tau_{\rm{CVM}}$ in Figure~\ref{fig:tauCVM}. 
Interestingly, $\tau_{\rm{CVM}}(c,e,i)>1$ for all $c$, $e$ and $i$ for the following reason.
Because $c$, $e$ and $i$ are rates and hence positive numbers, we must have 
\begin{eqnarray*}
&0 < c < c+e \quad{\rm{and}} \quad 0 < i^2 < (e+i)^2\\
&\Rightarrow 0 < ci^2 < (c+e)(e+i)^2.
\end{eqnarray*}
Next we add $i(e+i)^2$ to the last two terms in the inequality,
\[
0 < i[(e+i)^2 + ci] < (c+e+i)(e+i)^2.
\]
Hence, the denominator in~\eref{eq:tauCVM} is positive and smaller than the numerator, proving $\tau_{\rm{CVM}}(c,e,i) > 1$.
By contrast, the corresponding TLVM scale factor $\tau_{\rm{TLVM}}$, given by equation~\eref{eq:tauTLVM}, can be larger or smaller than $1$.

\begin{figure}
\begin{center}
\includegraphics[width=10cm]{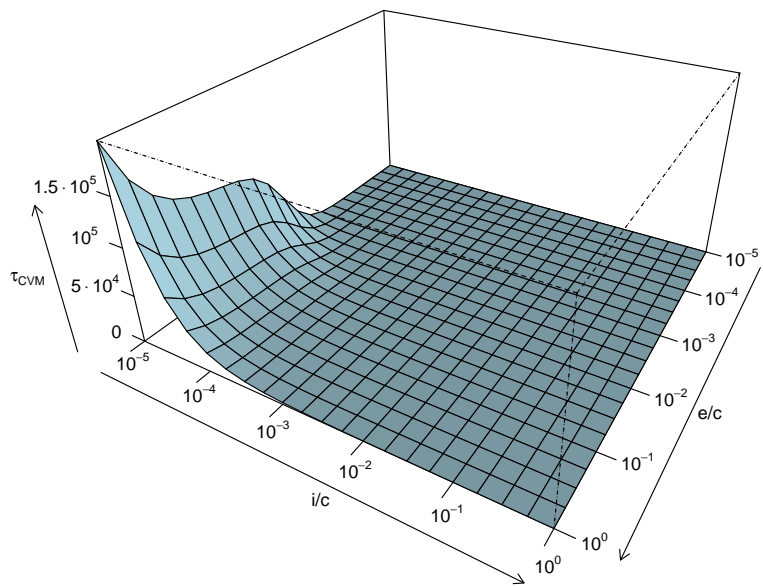}
\caption{The factor $\tau_{\rm{CVM}}$ by which the mean CVM consensus time $\mu_{\rm{CVM}}^{(1)}$ is prolonged compared to the BVM. Because $\tau_{\rm{CVM}}$ only depends on $\frac e c$ and $\frac i c$ (see equation~\ref{eq:tauCVM}), we use these ratios for the coordinate axes. The delay is especially severe when $e$ is large and $i$ is small. In this region, agents tend to be candid about their internal opinions towards the public and they maintain their internal opinions for a long time. In the limit $i\to 0$ and $e\to\infty$, the CVM is similar to the zealot model by Mobilia et al.~\cite{MobiliaEtAl07}, see our discussion in section~\ref{sec:discuss}.}
\label{fig:tauCVM}
\end{center}
\end{figure}

\section{Discussion}
\label{sec:discuss}

Having concealed opinions is ubiquitous on various subjects from politics to personal habits. 
An agent may choose to misrepresent her privately held opinion when it appears to be socially unacceptable.
For example, the agent may disapprove of the latest fashion trend, but still adopt it just to blend in with her acquaintances.
Kuran~\cite{Kuran09} argues that such preference falsification can slow down changes in social norms.
The CVM is a simple model to investigate whether the existence of concealed opinions indeed prolongs the consensus time and, if yes, to what extent.

Our study is related to earlier models on ``partisan'' voters~\cite{MasudaHeteroVot10,MasudaRednerTruth10}. Partisans have an innate and fixed preference for one of the opinions. 
In the limit $i\to0$, the CVM resembles the partisan model: although agents externally accept other opinions, the internal layer never changes. Models with ``zealot'' agents (also called stubborn or inflexible agents~\cite{MobiliaEtAl07,GalamJacobs07,YildizStubborn2013}) are even more restrictive.
Zealots are agents that never change their opinions.
In these models, typically only a small number of agents are zealots, but even as few as two zealots with opposite opinions are enough to prevent a consensus.
In the limiting case $i\to0$ and $e\to\infty$, the CVM corresponds to a zealot model so that the consensus time goes to infinity. 
This explains the high values of $\tau_{\rm{CVM}}(c,e,i)$ in the front left corner of Figure~\ref{fig:tauCVM}.

The partisan model permits the flow of opinions from the internal to the external layer, but only in this direction.
In the CVM, the flow is bidirectional (provided that  $i>0$ and $e>0$) so that agents can hold contrarian opinions for long but not infinite times. The longer an agent has been holding an external opinion, the more likely it is that she has adopted it as an internal one. Thereby, the CVM takes into account Kuran's remark that social norms can become genuinely accepted simply because they have persisted for a long time.

A characteristic feature of the CVM is that, for all externalization and internalization rates, the consensus is on average slower than in the BVM.
Masuda et al.~\cite{MasudaHeteroVot10} have shown that one way to slow down the convergence to consensus in the BVM is the introduction of heterogeneous copying rates. 
In their model, each agent $k$ has a different copying rate $c_k$ chosen from a heavy-tailed probability distribution.
The CVM gives an alternative mechanism for the prolongation of the consensus time.
It keeps the assumption of homogeneity as in the original BVM (i.e.\ neither $c$ nor $e$ nor $i$ depend on the agent). 
The deceleration in the CVM is instead caused by the requirement to reconcile the two layers with each other. The process that leads to reconciliation is fairly complex. Even when the external layer has reached an apparent consensus, the alternative opinion may lurk in the internal layer. This model feature in itself prolongs the consensus time. However, the main reason for the slower consensus is that the lurking opinion can come to the fore by externalization and then spread by copying, returning the system to a state of discord in the external layer.

In spite of this complexity,  we were able to find some simple relations for a complete graph if $N\gg1$. We defined a martingale $M$ that determines the probability that the agents eventually agree on one opinion (``red'') rather than the alternative (``blue'').  By analyzing the stochastic dynamics of $M$, we obtained the leading-order terms for the mean and standard deviation of the consensus time. Notably, the consensus times in the CVM only differ from the BVM by a factor $\tau_{\rm{CVM}}(c,e,i)>1$ that is independent of $N$.
It would be an interesting task for future research to study how different network topologies (e.g.\ Erd\H{o}s-R\'enyi graphs or spatial lattices) or heterogeneity in the rates $c$, $e$, and $i$ change the consensus time in the CVM. 

\ack
We are grateful to K\'aroly Tak\'acs and Zsuzsanna Szvetelszky for exciting discussions about the topic and helpful comments on this manuscript. 
The work was supported by the European Commission (project number FP7-PEOPLE-2012-IEF 6-4564/2013, M.~T.~Gastner), the European Research Council under the European Union’s Horizon 2020 research and innovation programme (grant agreement No 648693, M.~Guly\'as) and NKFIH-OTKA (grant agreements K109215 and K124438, B.~Oborny, and K112929, M.~Guly\'as).

\appendix
\renewcommand{\thetheorem}{\Alph{section}.\arabic{theorem}}
%
%

\section{Derivation of the probability of a red consensus}

We want to calculate the probability that all agents finally agree on the red opinion.
The crucial step is to find a suitable martingale.

\begin{lemma}
The random variable $M(t) = \frac{iS_R+e S_r}{e+i}$ is a martingale of both the TLVM and CVM. 
That is, $M$ satisfies
\begin{equation}
  E[M(t+u) \mid S(t)] = M(t)
  \label{eq:TLVMmartApp}
\end{equation}
for any $u\geq 0$, where $S(t)$ is the state at time $t$,
\begin{equation}
S(t) = 
\cases{
[S_R(t), S_r(t)] & for the TLVM,\\
[S_R(t), S_r(t), S_{Rr}(t)] & for the CVM.
}
\label{eq:S}
\end{equation}
\label{lem:martingale}
\end{lemma}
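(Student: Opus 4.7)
The plan is to apply the standard infinitesimal criterion for martingales of a continuous-time Markov chain with finite state space: a bounded function $f$ of the state yields a martingale $f(S(t))$ as soon as the generator annihilates it, i.e.\ $(\mathbf{Q}f)(s) := \sum_{s'} Q(s,s')\,f(s') = 0$ for every state $s$. Once this is established, the Kolmogorov forward equation gives $\frac{\rmd}{\rmd u}E[f(S(t+u))\mid S(t)] = E[(\mathbf{Q}f)(S(t+u))\mid S(t)] = 0$, so $E[M(t+u)\mid S(t)]$ is constant in $u\geq 0$ and equals $M(t)$ at $u=0$, which is exactly~\eref{eq:TLVMmartApp}. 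Finiteness of the state space removes any integrability concern, so the whole task reduces to verifying $(\mathbf{Q}m)(s) \equiv 0$ for both models.

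For the TLVM I would compute $(\mathbf{Q}_{\rm{TLVM}}m)(\rho_R, \rho_r)$ directly from the four off-diagonal rows of Table~\ref{tab:TLVMtrans}. The jumps $\rho_R\mapsto \rho_R\pm\frac1N$ change $m$ by $\pm i/[N(e+i)]$ and the jumps $\rho_r\mapsto \rho_r\pm\frac1N$ by $\pm e/[N(e+i)]$, so the two copying contributions $cN\rho_R(1-\rho_R)$ that appear with opposite sign in the $\rho_R$-rows cancel at once. What remains is $\frac{i}{e+i}\,e\{(1-\rho_R)\rho_r - \rho_R(1-\rho_r)\} + \frac{e}{e+i}\,i\{\rho_R(1-\rho_r) - (1-\rho_R)\rho_r\}$, which is manifestly zero.

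For the CVM the calculation is structurally identical using the six off-diagonal rows of Table~\ref{tab:CVMtrans}. Grouping transitions by the sign of $\Delta\rho_R$, the $c$-terms in the two $\Delta\rho_R = +\frac1N$ entries collapse via $cN\rho_R(\rho_r - \rho_{Rr}) + cN(1-\rho_R-\rho_r+\rho_{Rr})\rho_R = cN\rho_R(1-\rho_R)$, leaving the externalization residue $eN(\rho_r - \rho_{Rr})$; the $\Delta\rho_R = -\frac1N$ rows collapse analogously to $cN\rho_R(1-\rho_R) + eN(\rho_R - \rho_{Rr})$. Their difference, weighted by $i/[N(e+i)]$, contributes $\frac{ie}{e+i}(\rho_r - \rho_R)$ to $(\mathbf{Q}_{\rm{CVM}}m)$, while the two internalization transitions contribute $\frac{ei}{e+i}(\rho_R - \rho_r)$, and the sum cancels. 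Note that the $\rho_{Rr}$ increments drop out entirely, as they must since $m$ has no $\rho_{Rr}$ dependence; this is the structural reason both models share the same martingale.

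The only real obstacle is bookkeeping: the CVM has six off-diagonal entries whose rates mix all three coordinates $\rho_R$, $\rho_r$, $\rho_{Rr}$, and one must track the signs of the increments $\pm i/[N(e+i)]$ and $\pm e/[N(e+i)]$ together with the rates carefully. Once the sum is organized by which coordinate changes, and one observes that copying alone is a symmetric random walk in $\rho_R$ and therefore contributes nothing to the drift of $m$, the remaining externalization and internalization terms cancel in matched pairs, yielding the lemma for both models simultaneously.
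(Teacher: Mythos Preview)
Your proposal is correct and follows essentially the same route as the paper: both reduce the martingale claim to the generator condition $\sum_x Q(y,x)\,m(x)=0$ for every state $y$ (the paper arrives there by differentiating the matrix-exponential representation of $E[M(t+u)-M(t)\mid S(t)=s]$, you via the infinitesimal/Dynkin criterion), and then verify this identity from the transition tables. The paper merely states that the verification ``can be'' done by substitution, whereas you actually carry it out and organize the cancellations cleanly; your observation that the copying terms form a symmetric random walk in $\rho_R$ and hence contribute no drift, leaving only the matched externalization/internalization pairs, is exactly the mechanism the paper leaves implicit.
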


\begin{proof}
Let $\mathbf{Q}$ be the transition rate matrix $\mathbf{Q}_{\rm{TLVM}}$ or $\mathbf{Q}_{\rm{CVM}}$, respectively.
Because $S$ is a Markov chain with transition rate matrix
$\mathbf{Q}$, we must have
\begin{equation}
P[S(t+u) = x \mid S(t) = s] = [\exp(u\mathbf{Q})]_{s, x},
\label{eq:P_from_expApp}
\end{equation}
where the right-hand side is the $(s, x)$-th element in the matrix
$\exp(u\mathbf{Q})$.
Hence,
\begin{equation}
  E[M(t+u) - M(t) \mid S(t) = s] =
  \sum_x\left[\exp(u\mathbf{Q})-\mathbf{1}\right]_{s,x} m(x),
  \label{eq:M|SApp}
\end{equation}
where $\mathbf{1}$ is the identity matrix and $m$ is defined by~\eref{eq:m}.
Next we take the time derivative of~\eref{eq:M|SApp},
\begin{eqnarray*}
  \fl\frac{\rmd}{\rmd u} E[M(t+u) - M(t) \mid S(t) = s]
  &=
    \sum_x[\exp(u\mathbf{Q})\mathbf{Q}]_{s,x}m(x)\\
  &=
    \sum_y[\exp(u\mathbf{Q})]_{s,y}\sum_xQ(y,x)
    m(x).
\end{eqnarray*}
If we can show
\begin{equation}
  \sum_x Q(y,x)m(x) = 0
  \label{eq:Qm=0App}
\end{equation}
for all states $y = (\rho_R, \rho_r)$ in the TLVM and $y = (\rho_R, \rho_r, \rho_{Rr})$ in the CVM, then $E[M(t+u) - M(t) \mid S(t)
= s]$ is independent of $u$ so that~\eref{eq:TLVMmartApp} must be true.
We can verify~\eref{eq:Qm=0App} by substituting the elements of $\mathbf{Q}_{\rm{TLVM}}$ or $\mathbf{Q}_{\rm{CVM}}$ from Table~\ref{tab:TLVMtrans} or~\ref{tab:CVMtrans}, respectively, into~\eref{eq:Qm=0App}.
\end{proof}

The probability of a red consensus now follows from the following corollary.

\begin{corollary}
If the initial state is 
\[
S(0) = 
\cases{
(\rho_R, \rho_r) & in the TLVM,\\
(\rho_R, \rho_r, \rho_{Rr}) & in the CVM,
}
\]
then the probability of reaching a red consensus is $m(\rho_R, \rho_r)$, where $m$ is defined by~\eref{eq:m}.
\label{cor:exit_prob}
\end{corollary}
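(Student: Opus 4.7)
The plan is to apply the optional stopping theorem to the martingale $M(t) = m(S_R(t), S_r(t))$ established in Lemma~\ref{lem:martingale}. First, I would observe that because the initial state has deterministic fractions $\rho_R$ and $\rho_r$, we have $M(0) = m(\rho_R, \rho_r)$ with probability one, and so $E[M(0)] = m(\rho_R, \rho_r)$. Next, I would identify the possible values of $M$ at the consensus time: in both models the definition of $T_{\rm{cons}}$ requires a single opinion to be present in both layers, so at $T_{\rm{cons}}$ the configuration is either all $Rr$ (giving $S_R = S_r = 1$) or all $Bb$ (giving $S_R = S_r = 0$). Hence $M(T_{\rm{cons}}) \in \{0,1\}$, and by the definition of $m$, $M(T_{\rm{cons}}) = 1$ precisely when a red consensus is reached.

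Second, I would verify the hypotheses of optional stopping. Since $M$ is a bounded martingale with $0 \le M(t) \le 1$ for all $t$, it suffices to show that $T_{\rm{cons}}$ is almost surely finite. This follows because the state space is finite, the two consensus states are absorbing, and from any non-absorbing state there is a positive-probability sequence of copy events (and, in the CVM, externalization/internalization events) that leads in finitely many steps to a consensus; hence the chain is absorbed in finite expected time. With $T_{\rm{cons}} < \infty$ almost surely and $|M(t \wedge T_{\rm{cons}})| \le 1$, the optional stopping theorem (cited in the BVM discussion via~\cite{Durrett16}) yields
\begin{equation*}
E[M(T_{\rm{cons}})] = E[M(0)] = m(\rho_R, \rho_r).
\end{equation*}

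Finally, since $M(T_{\rm{cons}})$ is the indicator that a red consensus occurs, $E[M(T_{\rm{cons}})] = P(\text{red consensus})$, and the corollary follows. The CVM case is handled identically: although the state also carries $\rho_{Rr}$, the martingale $M$ depends only on $(S_R, S_r)$, and the argument above is insensitive to the extra coordinate.

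The only step that requires any genuine care is the almost-sure finiteness of $T_{\rm{cons}}$ in the CVM, because one must be sure that discord between the two layers cannot be preserved forever. This is not really an obstacle, though, since on a finite complete graph the internalization and externalization events with positive rates $iN$ and $eN$ guarantee that the non-consensus part of the state space is transient. Thus the argument is essentially a direct invocation of optional stopping, parallel to the BVM exit-probability computation carried out in Section~\ref{sec:bvm}.
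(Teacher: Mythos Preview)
Your proposal is correct and follows essentially the same route as the paper: apply optional stopping to the bounded martingale $M$ at the stopping time $T_{\rm cons}$, then use $M(T_{\rm cons})\in\{0,1\}$ to identify $E[M(T_{\rm cons})]$ with the probability of a red consensus. The paper's proof is terser and does not spell out the almost-sure finiteness of $T_{\rm cons}$, which you justify explicitly, but otherwise the arguments are identical.
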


\begin{proof}
The time $T_{\rm{cons}}$ until the consensus is reached is a stopping
time and $M$ is a martingale with time-independent upper and lower
bounds, namely $0$ and $1$.
Thus, $E[M(T_{\rm{cons}})] = E[M(0)]$.
Because $M(T_{\rm{cons}})$ can only be $0$ (blue consensus) or $1$
(red consensus), we must have
$P({\rm{red\ consensus}}) = EM(T_{\rm{cons}}) = EM(0) = m(\rho_R,
\rho_r)$.
\end{proof}

\section{Derivation of the short-term evolution of the TLVM and CVM}
\label{app:shortTermEvol}

In this appendix, we derive the conditional mean and variance of a future state shortly after the system was in state $S$ defined by~\eref{eq:S}.

\begin{theorem}
We define $s = (\rho_R, \rho_r)$ in the TLVM and $s = (\rho_R, \rho_r, \rho_{Rr})$ in the CVM.
In both the TLVM and the CVM, the conditional expectation of a future state after a time $u\geq 0$ is given by
\begin{eqnarray}
\fl E\left[S_R(t+u) \mid S(t) = s\right] =
\frac{i\rho_R + e\rho_r + e[\rho_R-\rho_r]\exp[-(e+i)u]} {e+i}\ ,
\label{eq:TLVMcondExpectS_R}\\
\fl E\left[S_r(t+u) \mid S(t) = s\right] =
\frac{i\rho_R + e\rho_r + i[\rho_r-\rho_R]\exp[-(e+i)u]} {e+i}\ ,
\label{eq:TLVMcondExpectS_r}
\end{eqnarray}
independent of $N$, whereas the conditional variances satisfy
\begin{eqnarray}
\var\left[S_R(t+u) \mid S(t) = s\right] = O(N^{-1}),
\label{eq:varSR}\\
\var\left[S_r(t+u) \mid S(t) = s\right] = O(N^{-1}).
\label{eq:varSr}
\end{eqnarray}
Therefore, if $N\gg 1$, the exponential decay in~\eref{eq:TLVMcondExpectS_R} and~\eref{eq:TLVMcondExpectS_r} is much faster than the increase in the variances.
\label{thm:shortTermEvol}
\end{theorem}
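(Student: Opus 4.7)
The plan is to treat $S$ as a pure-jump Markov chain with generator $(\mathcal{L}f)(y) = \sum_x Q(y,x)[f(x)-f(y)]$, where $\mathbf{Q}$ is $\mathbf{Q}_{\rm{TLVM}}$ or $\mathbf{Q}_{\rm{CVM}}$ from Table~\ref{tab:TLVMtrans} or~\ref{tab:CVMtrans}. By Dynkin's formula,
\[
\frac{\rmd}{\rmd u} E[f(S(t+u))\mid S(t)=s] = E[(\mathcal{L}f)(S(t+u))\mid S(t)=s].
\]
First I would apply this with $f(y)=y_R$ and $f(y)=y_r$, reading the rates directly off the tables. In both models the contributions proportional to $c$ cancel pairwise, and in the CVM the additional $\rho_{Rr}$-dependent contributions also cancel, leaving
\[
(\mathcal{L} S_R)(y) = e(\rho_r-\rho_R), \qquad (\mathcal{L}S_r)(y) = i(\rho_R-\rho_r),
\]
which are identical in the TLVM and CVM. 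Consequently $\bar\rho_R(u) := E[S_R(t+u)\mid S(t)=s]$ and $\bar\rho_r(u) := E[S_r(t+u)\mid S(t)=s]$ satisfy the closed $2\times 2$ linear system
\[
\frac{\rmd\bar\rho_R}{\rmd u} = e(\bar\rho_r-\bar\rho_R), \qquad \frac{\rmd\bar\rho_r}{\rmd u} = i(\bar\rho_R-\bar\rho_r),
\]
with $(\bar\rho_R(0),\bar\rho_r(0))=(\rho_R,\rho_r)$. That this system closes on two variables, independently of $N$ and identically for the two models, is the crux of why~\eref{eq:TLVMcondExpectS_R}--\eref{eq:TLVMcondExpectS_r} hold simultaneously for the TLVM and CVM.

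To solve the ODE, I would use the eigenbasis already singled out by Lemma~\ref{lem:martingale}: the combination $i\bar\rho_R+e\bar\rho_r$ is conserved (eigenvalue $0$, corresponding to the martingale $M$), while $\bar\rho_R-\bar\rho_r$ has eigenvalue $-(e+i)$. Hence $i\bar\rho_R(u)+e\bar\rho_r(u)=i\rho_R+e\rho_r$ and $\bar\rho_R(u)-\bar\rho_r(u)=(\rho_R-\rho_r)\exp[-(e+i)u]$, and solving this $2\times 2$ system for $\bar\rho_R$ and $\bar\rho_r$ reproduces exactly~\eref{eq:TLVMcondExpectS_R} and~\eref{eq:TLVMcondExpectS_r}.

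For the variance bounds~\eref{eq:varSR}--\eref{eq:varSr}, I would use the Dynkin decomposition $S_R(t+u)=\bar\rho_R(u)+N_R(u)$, where $N_R$ is a mean-zero square-integrable martingale whose predictable quadratic variation is $\langle N_R\rangle_u = \int_0^u\sum_x Q(S(t+v),x)[x_R-S_R(t+v)]^2\,\rmd v$. Taking conditional expectations gives
\[
\var[S_R(t+u)\mid S(t)=s] = E\!\left[\int_0^u\!\sum_x Q(S(t+v),x)[x_R-S_R(t+v)]^2\,\rmd v\;\Big|\;S(t)=s\right].
\]
Inspection of Tables~\ref{tab:TLVMtrans} and~\ref{tab:CVMtrans} shows that every transition shifts $S_R$ by $0$ or $\pm 1/N$, and the total rate of $S_R$-changing jumps is bounded by a constant times $N$ uniformly over the compact state space (since all the polynomial rate expressions in $\rho_R,\rho_r,\rho_{Rr}\in[0,1]$ are bounded). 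The integrand is therefore $O(1/N)$ pointwise in the state, so for any fixed $u$ the conditional variance is at most $O(u/N)=O(N^{-1})$. The argument for $S_r$ is completely analogous, with $e$ replaced by $i$ in the relevant rate bounds.

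The main obstacle is purely bookkeeping: I need to carefully track the cancellations of the $c$- and (in the CVM) $\rho_{Rr}$-dependent terms in $(\mathcal{L}S_R)$ and $(\mathcal{L}S_r)$, since the fact that the mean dynamics close on $(\bar\rho_R,\bar\rho_r)$ and agree between the two models is not visible from the transition tables without this cancellation. Once the drift identities are in hand, both the exact solutions~\eref{eq:TLVMcondExpectS_R}--\eref{eq:TLVMcondExpectS_r} and the variance bounds~\eref{eq:varSR}--\eref{eq:varSr} follow from elementary linear ODE theory and a standard quadratic-variation estimate for finite-state Markov chains.
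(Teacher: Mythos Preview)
Your derivation of the conditional means is correct and takes a genuinely different route from the paper. The paper expands $\exp(u\mathbf{Q})$ as a power series and proves by induction on $k$ that $\sum_x x_R\,Q^k(s,x)=(-1)^{k-1}e(e+i)^{k-1}(\rho_r-\rho_R)$, then resums the series. Your generator computation $(\mathcal{L}S_R)(y)=e(y_r-y_R)$, $(\mathcal{L}S_r)(y)=i(y_R-y_r)$ is the infinitesimal version of the same identity; because these drifts are \emph{linear} in the state, Dynkin's formula closes on a $2\times 2$ ODE that you then diagonalize in the $(M,\,S_R-S_r)$ basis. This is cleaner and makes explicit why the TLVM and CVM share the same mean dynamics: the $c$- and $\rho_{Rr}$-dependent rates cancel already at the level of the first-order drift, a fact the paper's inductive argument verifies but does not highlight.

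The variance argument, however, has a genuine gap. What you call the ``Dynkin decomposition'' $S_R(t+u)=\bar\rho_R(u)+N_R(u)$ is not the Doob--Meyer decomposition: the true compensator is the \emph{random} integral $\int_0^u e\bigl(S_r(t+v)-S_R(t+v)\bigr)\,\rmd v$, not the deterministic $\bar\rho_R(u)-\rho_R$. Hence your $N_R$ is not a martingale (one can check directly from~\eref{eq:TLVMcondExpectS_R} that $E[N_R(u')\mid\mathcal{F}_{t+u}]\neq N_R(u)$), and the displayed identity equating $\var[S_R]$ with the expected quadratic variation is false as written. A repair that stays within your framework: apply the generator to $S_R^2$, $S_RS_r$, $S_r^2$. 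Since $(\mathcal{L}S_R^2)(y)=2y_R\,e(y_r-y_R)+\Gamma_R(y)$ with $\Gamma_R(y)=\sum_xQ(y,x)(x_R-y_R)^2=O(N^{-1})$ (this is precisely your jump-size/rate estimate), the second moments obey a closed linear ODE whose homogeneous solution is $\bar\rho_R^2,\bar\rho_R\bar\rho_r,\bar\rho_r^2$; the $O(N^{-1})$ forcing then gives $\var[S_R],\var[S_r]=O(N^{-1})$ for fixed $u$. The paper itself only says the variance bounds ``can be derived similarly'', so once patched your argument is at least as complete as the original.
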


\begin{proof}
We outline the proof of~\eref{eq:TLVMcondExpectS_R}.
Equations~\eref{eq:TLVMcondExpectS_r}--\eref{eq:varSr} can be derived similarly.

From~\eref{eq:P_from_expApp}, it follows that the conditional expectation is
\begin{eqnarray}
\fl E[S_R(t+u) \mid S(t) = s]
  &= \sum_x x_R [\exp(u\mathbf{Q})]_{s,x}
    \nonumber\\
  &= \sum_{k=0}^\infty \left\{\frac{u^k}{k!} \sum_x x_R Q^k(s, x)\right\},
  \label{eq:ES_R|St}
\end{eqnarray}
where $\mathbf Q$ is the transition rate matrix $\mathbf Q_{\rm{TLVM}}$ or $\mathbf Q_{\rm{CVM}}$ from Table~\ref{tab:TLVMtrans} or~\ref{tab:CVMtrans}, respectively.
The summation in~\eref{eq:ES_R|St} is over all states $x=(x_R, x_r)$ in the TLVM or $x=(x_R, x_r, x_{Rr})$ in the CVM.
Induction on $k$ proves that
\begin{equation}
\sum_x x_R Q^k(s, x) = (-1)^{k-1} e(e+i)^{k-1}(\rho_r - \rho_R),
\label{eq:sum_xR_Qk}
\end{equation}
for $k=1,2,\ldots$
Substituting~\eref{eq:sum_xR_Qk} into~\eref{eq:ES_R|St}, we obtain
\begin{eqnarray*}
\fl E[S_R(t+u) \mid S(t) = s]
&= \rho_R + \frac{e(\rho_R-\rho_r)}{e+i} \sum_{k=1}^\infty \frac{u^k}{k!}(-1)^k (e+i)^k\\
&=\frac{i\rho_R + e\rho_r + e(\rho_R-\rho_r)\exp[-(e+i)u]} {e+i}\ .
\end{eqnarray*}
\end{proof}

Our next result characterizes the short-term evolution of $S_{Rr}$ in the two-dimensional approximation of the CVM defined by the transition rate matrix $\widetilde{\mathbf{Q}}_{\rm{CVM}}$ given by equations~\eref{eq:QCVM2+i+}--\eref{eq:QCVM200}. Because in this approximation $S_R = S_r$, it is sufficient to characterize the state $S = (S_R, S_r, S_{Rr})$ by two values: the value of $m$ defined in Eq.~\ref{eq:m} and the value of $\rho_{Rr}$ assumed by the random variable $S_{Rr}$.

\begin{theorem}
The conditional expectation of $S_{Rr}$ satisfies
\begin{eqnarray}
  \fl E[S_{Rr}(t+u) \mid S(t) = (m, \rho_{Rr})]
  \label{eq:EcondSRr}\\
  \fl = \frac{cm^2 + (e+i)m}{c+e+i} + \left(\rho_{Rr} - \frac{cm^2 +
  (e+i)m}{c+e+i}\right) \exp[-(c+e+i)u] + O(N^{-1}).
  \nonumber
\end{eqnarray}
Thus, in the limit of large $N$, the conditional expectation approaches $\frac{cm^2 + (e+i)m}{c+e+i}$ after a transient of
duration $O[(c+e+i)^{-1}]$. The conditional variance satisfies 
\[
\var[S_{Rr}(t+u) \mid S(t) = (m,
\rho_{Rr})] = O(N^{-1})
\] 
so that, during the transient, the increase in the conditional variance is negligible if $N\gg 1$.
\label{thm:tildeQCVM}
\end{theorem}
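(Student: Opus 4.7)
The plan is to mirror the matrix-exponential strategy of Theorem~\ref{thm:shortTermEvol}, now applied to the generator $\widetilde{\mathbf{Q}}_{\rm{CVM}}$ defined by~\eref{eq:QCVM2+i+}--\eref{eq:QCVM200}. The first step is to compute how $\widetilde{\mathbf{Q}}_{\rm{CVM}}$ acts on the coordinate function $f(m,\rho_{Rr})=\rho_{Rr}$. Only three of the six off-diagonal rows change $\rho_{Rr}$: the ``$+\frac{1}{N}$'' jumps from~\eref{eq:QCVM2+i+} and~\eref{eq:QCVM2+e+}, together with the ``$-\frac{1}{N}$'' jump from~\eref{eq:QCVM2-i-}. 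Collecting the corresponding rates yields
\[
\sum_x \widetilde{Q}_{\rm{CVM}}\bigl((m,\rho_{Rr}),x\bigr)\,x_{Rr} = (cm+e+i)(m-\rho_{Rr}) - c\rho_{Rr}(1-m) = cm^2+(e+i)m-(c+e+i)\rho_{Rr}.
\]

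By the backward Kolmogorov equation applied to $f$, the function $g(u):=E[S_{Rr}(t+u)\mid S(t)=(m,\rho_{Rr})]$ therefore obeys
\[
g'(u) = c\,E[M(t+u)^2\mid S(t)] + (e+i)\,E[M(t+u)\mid S(t)] - (c+e+i)\,g(u), \qquad g(0)=\rho_{Rr}.
\]
To close this equation I would first verify by direct inspection (analogous to Lemma~\ref{lem:martingale}) that $M$ remains a martingale under $\widetilde{\mathbf{Q}}_{\rm{CVM}}$: the $i$-weighted and $e$-weighted contributions from the rates in~\eref{eq:QCVM2+i+}--\eref{eq:QCVM200} cancel exactly, so $E[M(t+u)\mid S(t)=(m,\rho_{Rr})]=m$. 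For the quadratic term I would use that every jump of $M$ has size $O(N^{-1})$ while jumps occur at total rate $O(N)$; the accumulated conditional variance of $M$ therefore grows at rate $O(N^{-1})$ and stays $O(N^{-1})$ uniformly for $u$ in any bounded interval, so $E[M(t+u)^2\mid S(t)=(m,\rho_{Rr})] = m^2+O(N^{-1})$.

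Substituting these identities into the ODE gives the linear inhomogeneous equation $g'(u)+(c+e+i)g(u)= cm^2+(e+i)m + O(N^{-1})$, which integrates via the integrating factor $\exp[(c+e+i)u]$ to exactly the right-hand side of~\eref{eq:EcondSRr}. For the variance claim I would apply Dynkin's formula to $\rho_{Rr}^2$; the instantaneous quadratic-variation rate
\[
\sum_x \widetilde{Q}_{\rm{CVM}}\bigl((m,\rho_{Rr}),x\bigr)\,(x_{Rr}-\rho_{Rr})^2 = O(N^{-1}),
\]
since only $O(N)$ jumps of size $N^{-1}$ contribute. A standard Gronwall bound then delivers $\var[S_{Rr}(t+u)\mid S(t)=(m,\rho_{Rr})]=O(N^{-1})$ on bounded time intervals.

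The main obstacle is ensuring that the $O(N^{-1})$ error terms stay uniform in $u$. This turns out to be benign because the coefficient $-(c+e+i)$ in the ODE is dissipative, so its propagator $\exp[-(c+e+i)u]$ contracts rather than amplifies perturbations; a forcing error of order $N^{-1}$ therefore produces a response of the same order for every $u\geq 0$, and the analogous Gronwall argument keeps the variance $O(N^{-1})$ on the $u=O(1)$ time scale relevant to the theorem.
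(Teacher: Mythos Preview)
Your proposal is correct and essentially equivalent in spirit to the paper's proof, but the packaging differs. The paper proceeds via the exponential series: it shows by induction on $k$ that
\[
\sum_{(x_m,x_{Rr})} x_m^a x_{Rr}\,(\widetilde{Q}_{\rm{CVM}})^k[(m,\rho_{Rr}),(x_m,x_{Rr})]
= (-1)^{k-1}(c+e+i)^{k-1}m^a\bigl[cm^2+(e+i)m-(c+e+i)\rho_{Rr}\bigr]+O(N^{-1}),
\]
and then sums $\sum_k u^k/k!$ to obtain~\eref{eq:EcondSRr}. Your route instead applies the generator once, invokes the backward Kolmogorov equation to get a closed linear ODE for $g(u)$, and integrates. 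The two computations are formally equivalent (your ODE is the derivative of the paper's series), but your version has the advantage of making the dissipative structure explicit: the coefficient $-(c+e+i)$ visibly contracts perturbations, so the $O(N^{-1})$ forcing error from $E[M^2]-m^2$ is propagated without amplification. In the paper's series formulation, one must in principle track how the $O(N^{-1})$ remainder behaves under iteration of $\widetilde{\mathbf Q}_{\rm{CVM}}$, which is left implicit. Conversely, the paper's induction (with the extra parameter $a$) simultaneously controls mixed moments $m^a\rho_{Rr}$, which is convenient if one wants to push the expansion further; your argument handles $m$ and $m^2$ separately via the martingale property and a quadratic-variation bound. Both lead to the same result at the same level of rigour.
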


\begin{proof}
One can show by induction on $k$ that
\begin{eqnarray*}
    \fl \sum_{(x_m, x_{Rr})} &x_m^a x_{Rr}(\widetilde{Q}_{\rm{CVM}})^k[(m, \rho_{Rr}), (x_m,
      x_{Rr})] =\\
    \fl & (-1)^{k-1} (c+e+i)^{k-1}m^a [cm^2 + (e+i)m - (c+e+i)\rho_{Rr}] +
      O(N^{-1})
\end{eqnarray*}
for all nonnegative integers $a$ and $k=1,2,\ldots$
For the special case $a=0$, it follows that
\begin{eqnarray}
    \fl &E[S_{Rr}(t+u) \mid S(t) = (m, \rho_{Rr})] = \sum_{k=0}^\infty \left\{\frac{u^k}{k!}\sum_{(x_m, x_{Rr})}
      x_{Rr} (\widetilde{Q}_{\rm{CVM}})^k[(m, \rho_{Rr}), (x_m,x_{Rr})]\right\}\nonumber\\
    \fl &= \rho_{Rr} - \frac{cm^2 + (e+i)m -
      (c+e+i)\rho_{Rr}}{c+e+i}\sum_{k=1}^\infty\frac{[-(c+e+i)u]^k}{k!}
      + O(N^{-1})\nonumber\\
    \fl &= \rho_{Rr} + \left(\rho_{Rr} -
      \frac{cm^2+(e+i)m}{c+e+i}\right)[\exp(-(c+e+i)u) -
      1] + O(N^{-1}).\label{eq:ESRrProof}
\end{eqnarray}
Rearranging the terms in~\eref{eq:ESRrProof} proves~\eref{eq:EcondSRr}. 
A similar argument proves that the conditional variance is $O(N^{-1})$.
\end{proof}

\section*{References}

\end{document}